\newtheorem{theorem}{Theorem}
\newtheorem{remark}{Remark}
\newcommand{\rem}[1]{}
\title{Stabilization of reacting systems}
\author{\"{U}nver \c{C}ift\c{c}i\thanks{\small{uciftci@nku.edu.tr}} $^{1,2}$ \\
\small{$^1$Department of Mathematics}, \small{Nam\i k Kemal University,}\\ \small{59030 Tekirda\u{g}, Turkey}\\[1.5ex]
\small{$^2$Johann Bernoulli Institute for Mathematics and Computer Science,}\\ \small{University of Groningen,} \small{PO Box 407,}\\ 
\small{9700 AK Groningen, The Netherlands} \\
}
\date{}
\begin{document}
\maketitle

\abstract{A feedback stabilization scheme to
stabilize a classical reacting Hamiltonian system
is proposed.
It is based on transforming a saddle-type
equilibrium to an asymptotically stable one, and
is given in a simple and algorithmic way. The question of
destabilization of a stable system to make
a reacting system is also addressed.
The theory is illustrated with  the examples of 
a model Hamiltonian of the form kinetic plus potential, and
the hydrogen atom in crossed and magnetic fields.}

%\noindent
%MSC 2010 numbers: 53C15, 53D17, 70H99
%\noindent 

%\noindent
%Keywords: 
%\noindent 

%%%%%%%%%%%
\section{Introduction}
%\section{\label{}}
Feedback stabilization of nonlinear systems
is a well-established topic in
control theory \cite{Sontag89}. As a special case
the Hamiltonian stabilization appears as a nice
technique due to the rich geometric structure
behind it: Poisson structures. A detailed formulation
of Hamiltonian stabilization can be found in \cite{NijmeijerSchaft90}
with an emphasis to mechanical systems. On the other hand,
a more general treatment which extends the stabilization method
to Poisson manifolds is introduced in \cite{JalnapurkarMarsden00},
and this generality led an application to
systems with symmetry.
For a related work on
stabilization of mechanical systems with symmetry we refer to \cite{Blochetal00,Blochetal01}.
A more recent work \cite{KrechetnikovMarsden07} outlines
controlling dissipation-induced instabilities.

Reaction-type dynamics has had a renewal of understanding
after the development of its phase space geometric picture
\cite{Uzeretal02}. It is based on identifying
geometric structures which govern reaction
dynamics around a saddle-type equilibrium.
Since the introduction of these structures there
have been a big log of work done on
the dynamics of these systems which are
reviewed in \cite{Waalkensetal08}. These
systems not only include chemically reacting systems
but also systems in celestial mechanics, atomic physics, diffusion dynamics in materials,
which have reaction-type dynamics \cite{Waalkensetal08}.
On the other hand
any attempt to stabilize these systems is apparently lacking.

Our aim in this paper is to address the 
problem of stabilization of reacting systems by means of
Hamiltonian stabilization tools. By the special character of
reacting systems, the problem turns out to be making 
a saddle-type equilibrium asymptotically stable
by adding some suitable feedback. The feedback
are algorithmically derived in terms of linearization of
the Hamiltonian vector field. We also touch upon the
problem of the other way around, namely, one can
adopt what is done for saddle equilibria to make
a center-type equilibria saddle-type. Two examples
elucidate the theoretic part of the paper:
a simple reacting system with a model potential, and
the hydrogen atom in crossed and magnetic fields.

We structured the paper
as follows. Sec.~\ref{sec:HS} briefly recalls Hamiltonian
stabilization and Sect.~\ref{sec:gmrs} gives a short introduction
to geometric theory of reactions. We give a detailed explanation of
stabilization of reacting systems in Sec.~\ref{sec:sod},
and a brief look at destabilization of stable systems in Sec.~\ref{sec:ds}.
Examples are given in Sec.~\ref{sec:ex} which are succeeded by conclusions. 
%%%%%%%%%%%
\section{Hamiltonian stabilization}
\label{sec:HS}
We begin with a brief
overview of some basics of the theory of Hamiltonian stabilization
on Poisson manifolds here. A more detailed information can
be found in \cite{NijmeijerSchaft90,JalnapurkarMarsden00}.

Let $P$ be Poisson manifold and
let $H:P \rightarrow \mathbb{R}$
be a Hamiltonian function with the corresponding
Hamiltonian vector field $X_H$. Then the equations
of motion read $\dot{z}=X_H(z)$.
If $z_0$ is an
equilibrium, i.e. $DH(z_0)=0$, the Hessian
$D^2H(z_0)$ is intrinsically defined. Now adding some
inputs $F_i:P \rightarrow \mathbb{R}$,
$i=1,\dots,m$,
such that $F_i(z_0)=0$
by 
\begin{equation}
\dot{z}=X_H(z)-X_{F_1}(z)\,u_1-\dots -X_{F_m}(z)\,u_m 
\end{equation}
gives a system
of which $z_0$ is an equilibrium as well.
Here the feedback are assumed to be $u_i=d_i\,X_{{F}_i}$,
$i=1,\dots,m$, for scalars $d_i>0$. 
Note that, if $F_i$ are in involution, i.e. $\{F_i,F_j\}=0$
for all $i,j=1,\dots,m$, then
$u_i=d_i \, \dot{F}_i$. 

Associated to the closed-loop system
given above the following space is defined:
\begin{equation}
\mathcal{C}=\mbox{span}\{F_i,\{H,F_i\},\{H,\{H,F_i\}\},\dots,\}, 
\end{equation}
$i=1,\dots,m$. 
Here the coefficients of the linear combinations are
real numbers. Accordingly, one defines the
associated co-distribution 
\begin{equation}
 d\mathcal{C}=\mbox{span}\{dg(z)|g\in\mathcal{C}\}.
\end{equation}  
Then we recall the following key result \cite{NijmeijerSchaft90}
for our application. 

\begin{theorem}\label{th:old_theorem}
Let $D^2H({z_0})$ be positive definite
and $\mbox{dim}(d\mathcal{C})=2n$ around $z_0$.
Then the feedback  $u_i=c_i\,X_{{F}_i}$
makes $z_0$ an asymptotically stable equilibrium. 
\end{theorem}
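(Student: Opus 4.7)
The natural approach is a Jurdjevic-Quinn / LaSalle argument with $H$ itself playing the role of Lyapunov function, so the plan divides into three stages: construct the Lyapunov function, show it decays along the closed-loop flow, and rule out non-trivial invariant sets on which the decay stalls.

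First, after translating so that $H(z_0)=0$, positive-definiteness of $D^2H(z_0)$ gives $H(z)>0$ on a punctured neighbourhood $U$ of $z_0$, so $H$ is a local Lyapunov candidate. I would then differentiate $H$ along the closed-loop vector field $X_H-\sum_i u_i\,X_{F_i}$, obtaining
\begin{equation}
\dot H = -\sum_{i=1}^m \{H,F_i\}\,u_i = -\sum_{i=1}^m c_i\,\{H,F_i\}^2 \leq 0,
\end{equation}
using the prescribed feedback (and $\{H,H\}=0$). This already yields Lyapunov stability on $U$.

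Next I would invoke LaSalle's invariance principle: trajectories starting close to $z_0$ converge to the largest invariant set $M\subset U$ contained in $\{\dot H=0\}=\bigcap_i\{\{H,F_i\}=0\}$. On $M$ the feedback vanishes identically, so the closed-loop flow there coincides with the flow of $X_H$. Differentiating $\{H,F_i\}\equiv 0$ along $X_H$ repeatedly and using invariance, I get $\operatorname{ad}_H^k F_i\equiv 0$ on $M$ for every $k\geq 0$ and every $i$. In other words, every function in $\mathcal{C}$ vanishes on $M$, which forces $dg(z)=0$ for all $g\in\mathcal{C}$ and all $z\in M$.

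Finally I would use the hypothesis $\dim(d\mathcal{C})=2n$ in a neighbourhood of $z_0$: since the differentials of the generators of $\mathcal{C}$ span $T_z^{*}P$ for $z$ near $z_0$, the map $z\mapsto\bigl(F_i(z),\{H,F_i\}(z),\ldots\bigr)$ is a local immersion at $z_0$, so the simultaneous vanishing condition above pins $M$ down to the single point $\{z_0\}$ (in a sufficiently small neighbourhood). LaSalle then upgrades stability to asymptotic stability.

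The conceptually delicate step is the last one: translating the rank condition $\dim(d\mathcal{C})=2n$ into the statement that the common zero set of $\mathcal{C}$ in $U$ reduces to $\{z_0\}$. One must verify that the generators of $\mathcal{C}$ all vanish at $z_0$ (true because $F_i(z_0)=0$ and $z_0$ is an equilibrium of $X_H$, so iterated Poisson brackets with $H$ preserve vanishing at $z_0$), and then apply the inverse function theorem to a well-chosen finite subfamily whose differentials already span $T_{z_0}^{*}P$. The rest of the argument is standard Lyapunov/LaSalle machinery.
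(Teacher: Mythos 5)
Your overall skeleton (use $H$ as the Lyapunov function, get $\dot H\le 0$ from the dissipative feedback, then apply LaSalle) is exactly the route the paper sketches, but the final stage of your argument has a genuine gap. From $\{H,F_i\}\equiv 0$ on the invariant set $M$ and invariance of $M$ under the (there purely Hamiltonian) flow, repeated differentiation along trajectories gives $\operatorname{ad}_H^k F_i\equiv 0$ on $M$ only for $k\ge 1$. The case $k=0$ does \emph{not} follow: on $M$ you only know $\dot F_i=\{F_i,H\}=0$, i.e.\ the $F_i$ are \emph{constant} along trajectories in $M$, not that they vanish there. (The intermediate sentence ``which forces $dg(z)=0$ for all $g\in\mathcal{C}$, $z\in M$'' is also a non sequitur: a function vanishing on a set says nothing about its differential on that set.) This matters for your last step, because the spanning hypothesis $\dim(d\mathcal{C})=2n$ involves the $F_i$ themselves; the differentials of the brackets with $k\ge 1$ alone need not span (in the kinetic-plus-potential model with $F_i=q_i$, the first brackets are $\omega_i p_i+\mathrm{h.o.t.}$ and cover only half the directions at $z_0$). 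So the inverse-function-theorem argument, which requires a spanning family of functions all known to vanish on $M$, is not available as stated, and you cannot yet conclude $M=\{z_0\}$.

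The standard repair --- and what the dimension condition is really for, as in the reference the paper cites --- is different: since every generator $\operatorname{ad}_H^k F_i$ of $\mathcal{C}$ is constant along the trajectories lying in $M$ (for $k\ge 1$ it vanishes identically, for $k=0$ its time derivative is $\{F_i,H\}=0$ there), one gets $\langle dg(z),X_H(z)\rangle=0$ for every $g\in\mathcal{C}$ and every $z\in M$. Because $d\mathcal{C}$ spans the whole cotangent space at points near $z_0$, this forces $X_H(z)=0$ on $M$, i.e.\ $M$ consists only of equilibria. Positive definiteness of $D^2H(z_0)$ makes $z_0$ a nondegenerate, hence isolated, critical point of $H$, so $M\cap U=\{z_0\}$ for a small enough neighbourhood $U$, and LaSalle then yields asymptotic stability. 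Your first two stages (positivity of $H$ near $z_0$, $\dot H=-\sum_i c_i\{H,F_i\}^2\le 0$ up to the paper's sign conventions, and the reduction to the largest invariant set in $\{\dot H=0\}$) are fine and coincide with the paper's intended proof; it is only the passage from the rank condition to $M=\{z_0\}$ that needs to be rerouted through ``$X_H=0$ on $M$'' rather than through vanishing of the $F_i$ on $M$.
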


The idea behind the proof is to use LaSalle's Principle
where the Lyapunov function is assumed to be the Hamiltonian,
and the dimensionality condition
makes sure that the only trajectories that lie
in a certain neighborhood are the equilibria \cite{Jalnapurkar99}.
 
\begin{remark}
 A system for which $D^2H(z_0)$ is not positive
can also be stabilized under suitable conditions \cite{NijmeijerSchaft90,JalnapurkarMarsden00}.
But we want to make $D^2H(z_0)$ positive definite
practically by using the linearization below. 
\end{remark}

\begin{remark}
 In general, it is not easy to show whether $d\mathcal{C}(z)$
is constant dimensional or $2n$-dimensional. But if
$F_i$ are independent, $m=n$, and in the form kinetic plus potential,
then  $\mbox{dim}(d\mathcal{C})=2n$ is guarantied \cite{NijmeijerSchaft90}. 
\end{remark}

%%%%%%%%%%%%%%%%
\section{\label{sec:gmrs}Geometry of reacting systems}
In this section, we outline the geometric theory of
reaction dynamics briefly as in the form given in \cite{CiftciWaalkens12}.
A detailed explanation can be found in
\cite{Uzeretal02,Waalkensetal08}, for instance. 

%%%%%
\subsection{The linear case}
\label{sec:phasespacestruc_linear}
Consider the simplest reaction-type Hamiltonian
, i.e. the quadratic Hamiltonian given by
\begin{equation}\label{eq:Hquadratic}
H_2(q,p)= \frac{\lambda}{2} (p_1^2 - q_1^2) + \sum_{k=2}^n  \frac{\omega_k}{2} (p_k^2 + q_k^2) \,,
\end{equation}
where $\lambda,\omega_k>0$. 
Then $DH(0)=0$ and the matrix associated with the linear vector field has
real eigenvalues $\pm \lambda$ and 
complex conjugate imaginary eigenvalues $\pm \mathrm{i}\, \omega_k$, $k=2,\ldots,n$.
Integrability of
the system can be seen by the constants of motion
\begin{equation}\label{eq:def_constants_motion}
\mathcal{I}_1 = p_1^2 - q_1^2\,, \quad   \mathcal{I}_k= p_k^2 + q_k^2 \,,\quad k=2,\ldots,n\,.
\end{equation}
 
Consider a fixed energy $E>0$, where $0$ is the energy of the saddle. Setting
 $q_1=0$ on the energy surface gives the $(2n -2)$-dimensional sphere
\begin{equation}\label{eq:def_DS_linear}
S_{\text{DS}}^{2n -2}  = \{ (q,p) \in \mathbb{R}^{2n } \,:\,  H_2(q,p) = E \,, q_1=0 \}\,.
\end{equation}
The dividing surface $S_{\text{DS}}^{2n -2} $ divides the energy surface into the two components
which have $q_1<0$ (the `reactants') and  $q_1>0$ (the `products'), respectively, and
as $\dot{q}_1= \partial H_2/\partial p_1=\lambda p_1\ne 0$ for $p_1\ne 0$ the dividing surface 
is everywhere transverse to the Hamiltonian flow except for the submanifold where
$q_1=p_1=0$. For $q_1=p_1=0$, one obtains 
$\sum_{k=2}^n  \frac{\omega_k}{2} (p_k^2 + q_k^2) = E$.
The submanifold thus is a  $(2n-3)$-dimensional sphere which we denote by
\begin{equation}\label{eq:NHIM_linear}
S_{\text{NHIM}}^{2n-3}  = \{ (q,p) \in \mathbb{R}^{2n} \,:\,  
H_2(q,p) = E \,, q_1=p_1=0  \}\,.
\end{equation}
This is a so called \emph{normally hyperbolic invariant manifold} \cite{Wiggins94} (NHIM for short),
i.e. $S_{\text{NHIM}}^{2n-3}$ is invariant (since $q_1=p_1=0$ implies $\dot{q}_1=\dot{p}_1=0$)
and the contraction and expansion rates for motions on $S_{\text{NHIM}}^{2n-3}$ are dominated by
those components related to directions transverse to $S_{\text{NHIM}}^{2n -3}$. 
The NHIM \eqref{eq:NHIM_linear} can be considered to form the equator of the dividing surface
\eqref{eq:def_DS_linear} in the sense that it divides it into two hemispheres which topologically are
$(2n-2)$-dimensional balls. All forward reactive trajectories (i.e. trajectories moving from reactants
to products) cross one of these hemispheres, and all backward reactive trajectories (i.e. trajectories moving
from products to reactants) cross the other of these hemispheres. 
Note that a trajectory is reactive only if it has $\mathcal{I}_1>0$ (i.e. if it has sufficient energy in the
first degree of freedom). Trajectories with  $\mathcal{I}_1<0$ are nonreactive, i.e. they stay on the side of
reactants or on the side of products. See Fig.~\ref{fig:saddle_center}
for the phase portrait of the system. A 
forward reactive trajectory is depicted by the dashed curves.
 
\begin{figure*}
\begin{center}
\raisebox{7cm}{(a)}\includegraphics[angle=0,width=6cm]{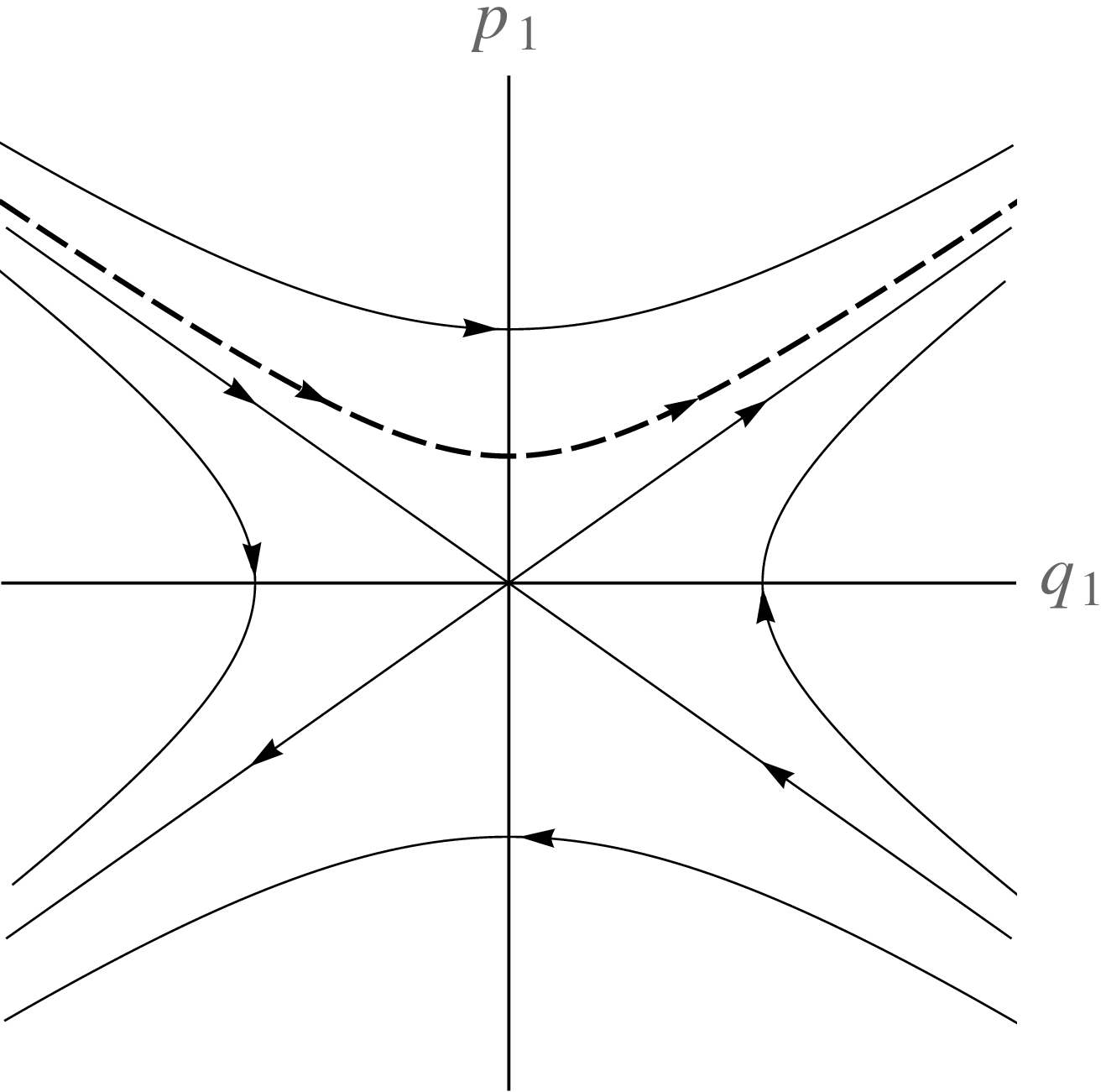} 
\raisebox{7cm}{(b)}\includegraphics[angle=0,width=6cm]{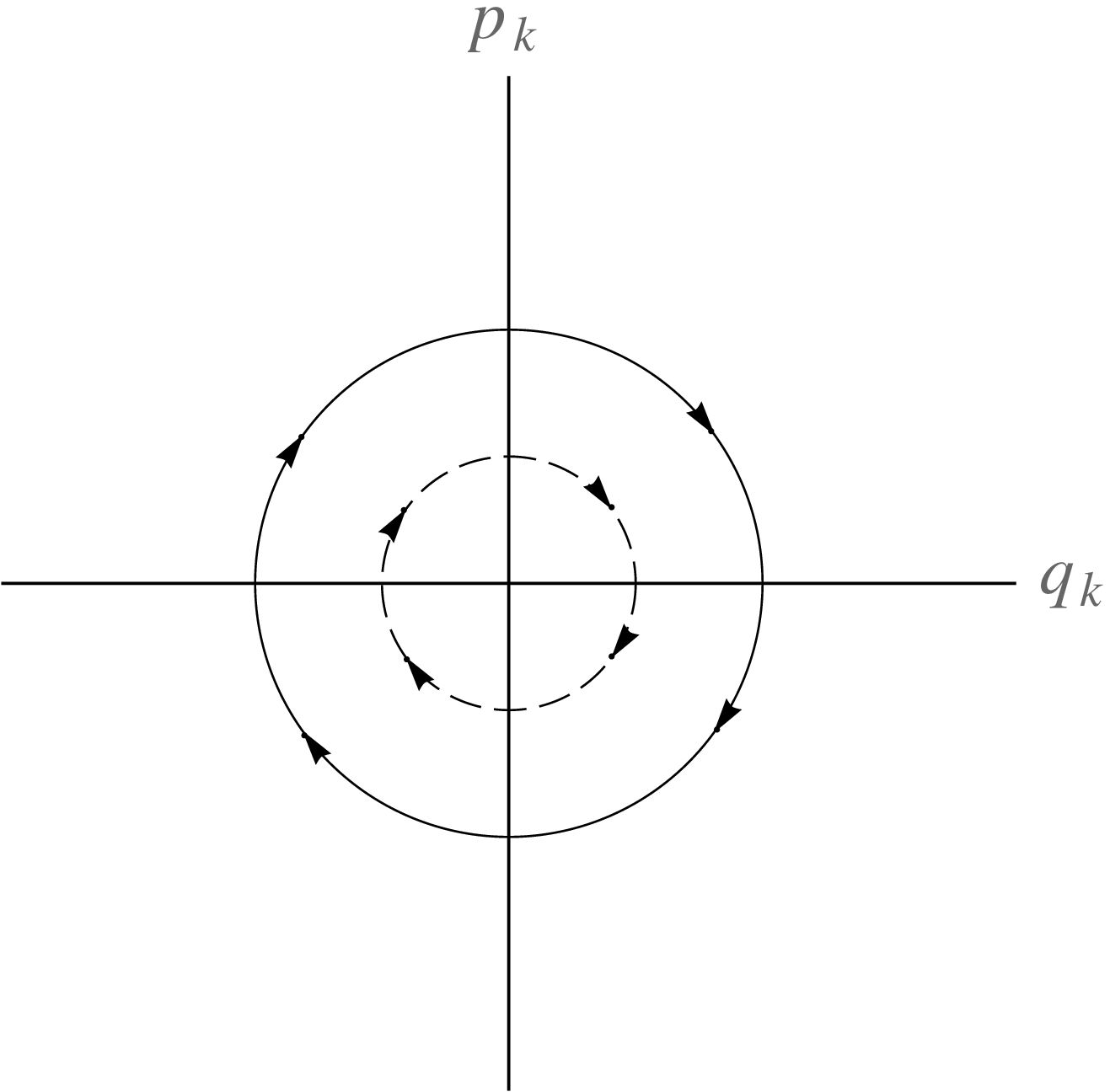}
\end{center}
\caption{\label{fig:saddle_center}
Phase portrait of the phase space. Projections of a reactive trajectory
(dashed curves) into
(a) the saddle plane, (b) the center planes.
}
\end{figure*}

%%%%%%
\subsection{The general (nonlinear) case}
\label{sec:phasespacestruc_nonlinear} 
Consider a Hamiltonian $H:\mathbb{R}^{2n} \rightarrow \mathbb{R}$ with
an equilibrium at the origin $(q,p)=(0,0)$ for some canonical coordinates
$(q,p)$. Assume that $H$ has a saddle-center-\dots-center stability type equilibrium,
that is, the matrix associated with
the linearization at $(0,0)$ of the Hamiltonian vector field
has eigenvalues $\mp \lambda$, $\lambda>0$,  and $\mp \mathrm{i} \,\omega_k$, $\omega_k>0$,
$k=1,\dots,n-1 $.   
Further assume that the submatrix corresponding to
the imaginary eigenvalues is semisimple.
In the neighborhood of the saddle the dynamics is thus similar to that of the
linear vector field described in Sec.~\ref{sec:phasespacestruc_linear}. 
In fact if follows from general principles that all the phase structures discussed
in Sec.~\ref{sec:phasespacestruc_linear} persist in the neighborhood of the saddle
(which in particular implies that one has to restrict to energies close to the energy of the saddle).  
Moreover, these phase space structures can be constructed in an algorithmic fashion  using a
Poincar{\'e}-Birkhoff normal form \cite{Uzeretal02,Waalkensetal08}.
Assuming that the eigenvalues $\omega_k$, $k=2,\ldots,n$, are independent over the
field of rational numbers (i.e. in the absence of resonances), 
the Poincar{\'e}-Birkhoff normal form yields a symplectic transformation to new
(\emph{normal form}) coordinates such that 
the transformed Hamiltonian function truncated at order $n_0$ of its Taylor expansion assumes the  form 
\begin{equation} 
H_{\text{NF}}({\cal I}_1,\mathcal{I}_2,\ldots,\mathcal{I}_n)\,,
\end{equation}
where ${\cal I}_1$ and $I_k$, $k=2,\ldots,n$, are constants of motions which
(when expressed in terms of the normal form coordinates) have the same form as in
\eqref{eq:def_constants_motion}, and $H_{\text{NF}}$ is a polynomial of order
$n_0/2$ in  ${\cal I}_1$ and $I_k$, $k=2,\ldots,n$ (note that only even orders $n_0$
of a normal form make sense).

In terms of  the normal form coordinates the phase space structures can be defined in
a manner which is virtually identical to the linear case by replacing
$H_2(q,p)$ by $H_{\mathrm{NF}}({\cal I}_1,I_2,\ldots,\mathcal{I}_n)$ in the definitions
in Sec.~\ref{sec:phasespacestruc_linear}.  Using then the inverse of the normal form transformation
allows one to construct the phase space structures in the original (`physical') coordinates.
As it is seen
the Poincar{\'e}-Birkhoff normal form is the main tool
in defining phase space structures, but
we only review the first order linearization in the next section, as
it serves enough for our purpose of stabilization.

%%%%%%%%%%%%%%%%%%
\section{Stabilization of saddle-type equilibria}
\label{sec:sod}
In this section we consider a nonlinear Hamiltonian system of
reaction-type, and we give a result which gives 
an algorithmic way of making the system
asymptotically stable around the given
equilibrium. This is done as follows.

Let $P$ be a Poisson manifold, 
$(z_1,\dots,z_{2n})=(x_1,\dots,x_n,P_1,\dots,P_n)$ be
a canonical coordinate system around $z_0 \in P$
which is set to be the origin $(0,\dots,0)$, 
and $H:P\rightarrow \mathbb{R}$
be an analytic Hamiltonian function with
the corresponding Hamiltonian vector field $X_H$
having $z_0$ as an equilibrium point
of type saddle-center-\dots-center.
So we assume that
the linearization matrix of $X_H$, or in other words the matrix
$J \, D^2H$ evaluated at $z_0$,  has eigenvalues 
$\mp \lambda, \, \mp \mathrm{i} \,
\omega_k$, 
for reals $\lambda,\omega_k>0$, $k=2,\dots,n$. 

One can put the quadratic part of
$H$ into the form (\ref{eq:Hquadratic}) 
by a symplectic change of coordinates \cite{Waalkensetal08}.
To do this label the eigenvalues
by 
\begin{equation}
e_1=\lambda=-e_{n+1},\quad e_k=\mathrm{i}\,\omega_k=-e_{k+n},\quad k=2,\dots,n,
\end{equation}
and corresponding eigenvalues by
$v_1,\dots,v_{2n}$. Consider the following symplectic
matrix 
\begin{equation}
 M=(c_1\,v_1,c_2\,\mbox{Re}v_2,\dots,c_n\,\mbox{Re}v_n,c_1\,v_{1+n},c_2\,\mbox{Im}v_2,\dots,c_n\,\mbox{Im}v_n)
\end{equation}
where 
\begin{equation}
 c_1=\langle v_1,J\,v_{1+n}\rangle^{-1/2},\quad  c_k=\langle \mbox{Re}v_k,J\,\mbox{Im}v_{k}\rangle^{-1/2},
\quad k=2,\dots,n.
\end{equation}
Then the coordinate transformation
\begin{equation}
M \cdot
\left[ 
\begin{array}{c}
   x\\
  P
 \end{array}
\right]
=
\left[ 
\begin{array}{c}
   \bar{q}\\
 \bar{p}
 \end{array}
\right]
\end{equation}
gives a new canonical coordinate 
system $(\bar{q}_1,\dots,\bar{q}_n,\bar{p}_1,\dots\,\bar{p}_n)$,
and in these coordinates
the quadratic part of the Hamiltonian takes the form
\begin{equation}\label{eq:Hquadratic_1}
H_2(\bar{q},\bar{p})= \lambda \, \bar{q}_1 \,\bar{p}_1 + \sum_{k=2}^n  \frac{\omega_k}{2} (\bar{p}_k^2 
+\bar{q}_k^2) \,.
\end{equation}
Let a rotation $N$ be given by
\begin{equation}
\begin{split}
q_1 &= \frac{1}{\sqrt{2}}\, \bar{q}_1- \frac{1}{\sqrt{2}}\, \bar{p}_1, \quad 
p_1 = \frac{1}{\sqrt{2}}\, \bar{q}_1+\frac{1}{\sqrt{2}}\, \bar{p}_1,  \\
q_k &=\bar{q}_k, \quad p_k=\bar{p}_k, \quad k=2,\dots n,
\end{split}
\end{equation}
then if we introduce
\begin{equation}\label{eq:matrix_s}
S=N \cdot M^{-1}
\end{equation} 
the coordinate transformation
\begin{equation}
S \cdot
\left[ 
\begin{array}{c}
   x\\
  P
 \end{array}
\right]
=
\left[ 
\begin{array}{c}
   q\\
 p
 \end{array}
\right]
\end{equation}
gives also a set of canonical
coordinates $(q_1\dots,q_n,p_1\dots,p_n)$. Then in
$(q,p)$ coordinates $H_2$
assumes the form (\ref{eq:Hquadratic}).

Consider the controls 
\begin{equation}
u_i=-c\,F_1-d_i\,\dot{F}_i, \quad i=1,\dots,n,
\end{equation}
with any constants such that $c> \lambda$ and $d_i>0, \, i=1,\dots,n$,
where the functions
$F_i:P\rightarrow \mathbb{R}$
are given by
\begin{equation}
F_i=\sum_{j=1}^{n}S_{ij}\,z_j
\end{equation}
for the matrix $S$ introduced in Eq.~\ref{eq:matrix_s}.
Then we prove
\begin{theorem}\label{main_th}
With the notion above, the system
\begin{equation}
\dot{z}=X_H(z)-X_{F_1}(z)\,u_1-\dots -X_{F_n}(z)\,u_n 
\end{equation}
is asymptotically stable
around $z_0$. 
\end{theorem}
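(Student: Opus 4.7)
The plan is to reduce the theorem to a direct application of Theorem~\ref{th:old_theorem}: the feedback splits into a proportional part $-cF_1$, which will be reabsorbed into a modified Hamiltonian $\tilde H$ whose Hessian at $z_0$ is positive definite, and a dissipative part $-d_i\dot F_i$, which will supply the dissipation needed by Theorem~\ref{th:old_theorem}. The key enabling observation is that, by construction of $S=NM^{-1}$, the functions $F_i=\sum_jS_{ij}z_j$ are nothing but the new canonical coordinates $q_i$ for $i=1,\dots,n$, hence linearly independent and in involution ($\{q_i,q_j\}=0$).

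Using the Leibniz identity $X_{F_i^2}=2F_iX_{F_i}$, the proportional feedback (applied symmetrically in each direction, which is the reading compatible with the template of Theorem~\ref{th:old_theorem}) is equivalent to replacing $H$ by
\begin{equation*}
\tilde H = H + \tfrac{c}{2}\sum_{i=1}^n F_i^2 = H + \tfrac{c}{2}\sum_{i=1}^n q_i^2,
\end{equation*}
so that the closed-loop equations read $\dot z=X_{\tilde H}(z)-\sum_i d_i \dot F_i(z)\,X_{F_i}(z)$, exactly the form of Theorem~\ref{th:old_theorem}. The quadratic part of $\tilde H$ in the new coordinates is
\begin{equation*}
\tilde H_2=\tfrac{\lambda}{2}p_1^2+\tfrac{c-\lambda}{2}q_1^2+\sum_{k=2}^n\tfrac{\omega_k}{2}p_k^2+\sum_{k=2}^n\tfrac{\omega_k+c}{2}q_k^2,
\end{equation*}
which is positive definite precisely because $c>\lambda$ and $\omega_k,c>0$; hence $D^2\tilde H(z_0)$ is positive definite. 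The spanning condition is then immediate: $F_i=q_i$ provides $n$ independent coordinate functions, and at linear order $\dot F_1=\lambda p_1$, $\dot F_k=\omega_k p_k$ for $k\geq 2$, so $\{F_i,\dot F_i\}_{i=1}^n\subset\mathcal C$ already supplies $2n$ functions whose differentials are linearly independent at $z_0$ (and hence in a neighbourhood by continuity). Theorem~\ref{th:old_theorem} applied to $(\tilde H,F_1,\dots,F_n)$ then yields asymptotic stability of $z_0$.

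The subtlest point I anticipate is the interpretation of the feedback. Taken literally, $u_i=-cF_1-d_i\dot F_i$ makes every $u_i$ carry the same term $-cF_1$, and the aggregate contribution $cF_1\sum_iX_{F_i}$ is not the Hamiltonian vector field of any single function, so the Hamiltonian-modification trick above strictly requires reading the controls as $u_i=-cF_i-d_i\dot F_i$. Under the literal reading, the same conclusion can still be reached by linearising at $z_0$: the $(q_1,p_1)$-block yields the scalar equation $\ddot q_1+d_1\lambda\dot q_1+\lambda(c-\lambda)q_1=0$, asymptotically stable for $c>\lambda$ and $d_1>0$, while each $(q_k,p_k)$-block is a damped oscillator linearly forced by $q_1$, forming an asymptotically stable cascade; Hartman--Grobman then upgrades this to local asymptotic stability for the full nonlinear system.
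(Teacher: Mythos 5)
Your proposal is correct and takes essentially the same route as the paper: the proportional part of the feedback is absorbed into a modified Hamiltonian whose Hessian at $z_0$ becomes positive definite because $c>\lambda$, the rank condition for Theorem~\ref{th:old_theorem} is verified via $F_i=q_i$ together with $\dot F_1=\lambda p_1+\mathrm{h.o.t.}$, $\dot F_k=\omega_k p_k+\mathrm{h.o.t.}$, involution follows from $\{q_i,q_j\}=0$, and Theorem~\ref{th:old_theorem} is then invoked. The only divergence is minor: the paper adds just $\tfrac{c}{2}F_1^2$ (reading the proportional term as acting through $F_1$ alone, so $D^2H_{\mathrm{mod}}(z_0)=\mathrm{diag}(c-\lambda,\lambda,\omega_2,\omega_2,\dots,\omega_n,\omega_n)$) while you symmetrize with $\tfrac{c}{2}\sum_i F_i^2$; your closing remark correctly flags a real ambiguity in the paper's definition $u_i=-c\,F_1-d_i\,\dot F_i$, and your linearization/cascade argument is a valid way to cover the literal reading, which the paper's own proof silently bypasses.
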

\begin{proof}
We want to show that the conditions
of Theorem~\ref{th:old_theorem}
are satisfied. 
First, it is easily seen that
$F_i(z_0)=\sum_{j=1}^{n}S_{ij}\,z_j(z_0)=0$ for
$i=1,\dots,n$.
Then we add the control $v=-c\,F_1$
and consider the new system
\begin{equation}\label{eq:first_system}
\dot{z}=X_H-X_{F_1}(z)\,v.
\end{equation}
One can check that the system (\ref{eq:first_system})
is also Hamiltonian with the
modified Hamiltonian $H_{\mbox{mod}}=H+\frac{1}{2}\,c\,F_1^2$
and the Hessian matrix $D^2H(z_0)$ is positive definite.
In fact,
\begin{equation}
\dot{z}=X_H-X_{F_1}(z)\,v=X_H+c\,X_{F_1}(z)\,F_1
=X_H+c\,X_{\frac{1}{2}\,F_1^2}(z)
=X_{H+\frac{1}{2}\,c\,F_1^2}(z),
\end{equation}
and in normal form coordinates
$F_i=q_i$, $i=1,\dots,n$, so in these coordinates
\begin{equation}
D^2H_{\mbox{mod}}(q,p)=D^2H(q,p)+c\,E_{11},
\end{equation}
where $E_{11}$ is the matrix with
zero entries  except the first entry
equal to $1$. This shows that
\begin{equation}
D^2H_{\mbox{mod}}(z_0)=
\mbox{diag}(-\lambda+c,\lambda,\omega_1,\omega_1,\dots,\omega_n,\omega_n).
\end{equation}
which is, clearly, positive definite since $c>\lambda$. 

So far, we have ensured the positive definiteness condition in   
Theorem~\ref{th:old_theorem}. Next we add the controls $v_i=-d_i\,\dot{F}_i$, $i=1,\dots,n$,
and it remains to show that the
functions $F_i$, 
$i=1,\dots,n$
satisfy the dimensionality assumption
$\mbox{dim}(d\mathcal{C})=2n$
for 
\begin{equation}
d\mathcal{C}(z)=\mbox{span}\{dg(z)|g\in \mathcal{C}\},
\end{equation}
where
\begin{equation}
\mathcal{C}=\mbox{span}\{F_i,\{H_{\mbox{mod}},F_i\},\{H,\{F_i\}\},\dots \}, \quad i=1,\dots n,
\end{equation}
and $F_i$ are in involution.
To see this
it is observed that
\begin{equation}
\mathcal{C}=\mbox{span}\{q_i,\{H_{\mbox{mod}} \circ S^{-1},q_i\},\{H,\{q_i\}\},\dots \}, \quad i=1,\dots n.
\end{equation}
In fact, by $F_i=q_i \circ S$ we have
\begin{equation}
\{H_{\mbox{mod}},F_i \}=\{H_{\mbox{mod}},q_i \circ S\}=\{H_{\mbox{mod}} \circ S^{-1}, q_i \}
\end{equation}
since $S$ is a Poisson map. But $H_{\mbox{mod}} \circ S^{-1}$ is real analytic
so we can write it as a Taylor series around $z_0$ where
the quadratic part is given by 
(\ref{eq:Hquadratic}). Then it can be seen that
$f_i=\{H_{\mbox{mod}} \circ S^{-1},q_i\},\, i=1,\dots,n$,
are independent, because one has
\begin{equation}
\begin{split}
 f_1&=\{H_{\mbox{mod}} \circ S^{-1},q_1\}=
\lambda \, p_1 + \mbox{h.o.t.}, \\
f_k&=\{H_{\mbox{mod}} \circ S^{-1},q_k\}=
\omega_k \, p_k + \mbox{h.o.t.}, \quad k=1,\dots,n.
\end{split}
\end{equation}
Furthermore, 
$\{q_1,\dots,q_n,f_1,\dots,f_n\}$ forms a set of $2n$
independent functions. So, $d\mathcal{C}(z)$ is $2n$-dimensional, in particular
\begin{equation}
d\mathcal{C}(z_0)=\mbox{span}\{dq_1(z_0),\dots,dq_n(z_0),dp_1(z_0),\dots,dp_n(z_0)\}.
\end{equation} 
 
 As the final step we need to show that
$F_i$ are in involution. This can also be seen
easily by 
\begin{equation}
\{F_i,F_j \}=\{q_i \circ S,q_j\circ S\}=\{q_i ,q_j\}=0.
\end{equation}
\end{proof}

The feedback added system is no more conservative because of the dissipative
inputs $v_i$. As the system is asymptotically
stable, trajectories projected into the phase planes look like the ones in
Fig.~\ref{fig:asymptotic}.   

\begin{figure*}
\begin{center}
\raisebox{7cm}{}\includegraphics[angle=0,width=6cm]{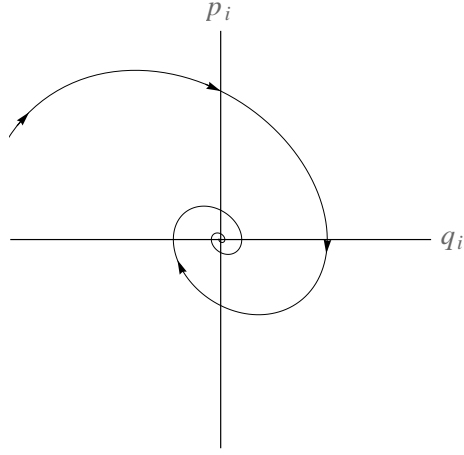}
\end{center}
\caption{\label{fig:asymptotic}
Projetion of trajectories near the equilibrium into the phase planes.}
\end{figure*}

%%%%%%%%%%%%%%%%%%%%%%%
\section{Destabilization of a stable system to make a reacting one}
\label{sec:ds}
A similar procedure as in Sec.~\ref{sec:sod} can also be applied to
a stable system with purely complex eigenvalues in order to
obtain an unstable system with a saddle.

We cansider again a Poisson manifold
$P$, a canonical coordinate system denoted by
$(z_1,\dots,z_{2n})=(x_1,\dots,x_n,P_1,\dots,P_n)$
around $z_0 \in P$
which is set to be the origin $(0,\dots,0)$, 
and be an analytic Hamiltonian $H:P\rightarrow \mathbb{R}$ with
the corresponding Hamiltonian vector field $X_H$
having $z_0$ as an equilibrium point
of type center-\dots-center.
So we assume that
the linearization matrix of $X_H$, or in other words the matrix
$J \, D^2H$ evaluated at $z_0$,  has eigenvalues 
$ \mp \mathrm{i} \, \omega_i$, 
for reals $\omega_i>0$, $i=1,\dots,n$. 

The quadratic part of the Hamiltonian can be put in the form
$H$ into the form 
\begin{equation}\label{eq:Hquadratic_center}
H_2(q,p)= \sum_{i=1}^n  \frac{\omega_i}{2} (p_i^2 + q_i^2) \,,
\end{equation}
by a symplectic change of coordinates \cite{Waalkensetal08}
as recalled in the following.
Label the eigenvalues by 
\begin{equation}
e_i=\mathrm{i}\,\omega_i=-e_{i+n},\quad i=1,\dots,n,
\end{equation}
and corresponding eigenvalues by
$v_1,\dots,v_{2n}$. Consider the following symplectic
matrix 
\begin{equation}
 M=(c_1\,\mbox{Re}v_1,\dots,c_n\,\mbox{Re}v_n,c_1\,\mbox{Im}v_1,\dots,c_n\,\mbox{Im}v_n)
\end{equation}
where 
\begin{equation}
 c_i=\langle \mbox{Re}v_i,J\,\mbox{Im}v_{i}\rangle^{-1/2},
\quad i=1,\dots,n.
\end{equation}
Then the coordinate transformation
\begin{equation}\label{eq:matrix_m_n}
M \cdot
\left[ 
\begin{array}{c}
   x\\
  P
 \end{array}
\right]
=
\left[ 
\begin{array}{c}
   q\\
   p
 \end{array}
\right]
\end{equation}
gives a new canonical coordinate 
system $(q_1,\dots,q_n,p_1,\dots\,p_n)$,
and in these coordinates
the quadratic part of the Hamiltonian takes the form
(\ref{eq:Hquadratic_center}).

Consider the control
\begin{equation}
u=-\,c\,F_1
\end{equation}
with any constants such that $c> \omega_1$ the function
$F_1:P\rightarrow \mathbb{R}$
is given by
\begin{equation}
F_1=\sum_{j=1}^{n}M_{1j}\,z_j
\end{equation}
for the matrix $M$ introduced in Eq.~\ref{eq:matrix_m_n}.
Then we prove
\begin{theorem}
With the notion above, the system
\begin{equation}
\dot{z}=X_H(z)-X_{F_1}(z)\,u_1
\end{equation}
is of type $saddle-center-\dots-center$.
\end{theorem}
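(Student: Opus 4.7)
The plan is to follow the first portion of the proof of Theorem~\ref{main_th} — specifically, the step in which the feedback $-c F_1$ was used to alter the Hessian of the Hamiltonian — and then read off the spectrum of the linearization at $z_0$. Since the stability type of an equilibrium is determined by the quadratic part of $H$, the argument reduces to a finite-dimensional linear-algebra computation in the canonical coordinates supplied by the matrix $M$.

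First I would check that $z_0$ remains an equilibrium of the closed-loop vector field: since $F_1$ is linear in $z$ and vanishes at the origin, this is automatic. Next, exactly as in the proof of Theorem~\ref{main_th}, the identity $X_{F_1^2/2}=F_1\,X_{F_1}$ shows that the closed-loop system is again Hamiltonian,
\[
X_H - X_{F_1}\,u \;=\; X_H + c\,F_1\,X_{F_1} \;=\; X_{H + \frac{1}{2}c F_1^2},
\]
so there is a modified Hamiltonian of the form $H_{\mathrm{mod}} = H \pm \tfrac{1}{2}c\,F_1^2$ (the sign is the one that \emph{subtracts} $F_1^2$ from $H$, which is precisely where the hypothesis $c>\omega_1$ — rather than merely $c>0$ — enters).

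The main step is then a single block computation. In the canonical coordinates $(q,p)$ produced by $M$, the quadratic part of $H$ is exactly \eqref{eq:Hquadratic_center}, and by construction $q_1 = \sum_j M_{1j} z_j = F_1$. Hence the quadratic part of $H_{\mathrm{mod}}$ becomes
\begin{equation*}
H_{\mathrm{mod},2}(q,p) \;=\; \tfrac{\omega_1}{2}\,p_1^2 + \tfrac{\omega_1 - c}{2}\,q_1^2 + \sum_{k=2}^{n} \tfrac{\omega_k}{2}\bigl(p_k^2 + q_k^2\bigr),
\end{equation*}
and the linearization $J\,D^2H_{\mathrm{mod}}(z_0)$ splits into $n$ independent symplectic $2\times 2$ blocks. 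For $k\ge 2$ the block is unchanged and contributes the imaginary pair $\pm\mathrm{i}\,\omega_k$. For the first block the characteristic polynomial is $\mu^2 = \omega_1(c-\omega_1)$, which is strictly positive by hypothesis, so its eigenvalues form the real pair $\pm\sqrt{\omega_1(c-\omega_1)}$. Together this is exactly the saddle-center-$\cdots$-center spectrum claimed, and semisimplicity of the imaginary blocks is preserved since the rank-one perturbation of the Hessian is supported on the first block only.

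The one place that needs attention is the sign bookkeeping required to turn $\omega_1+c$ into $\omega_1-c$ in the $q_1^2$ coefficient; this is the only reason the hypothesis reads $c>\omega_1$ rather than $c>0$. Once the sign is pinned down, block-diagonality makes the eigenvalue computation immediate and no further work is needed, in sharp contrast with Theorem~\ref{main_th}, where the extra dissipative feedbacks $-d_i\dot F_i$ and the dimensional condition on $d\mathcal{C}$ had to be handled via LaSalle's principle.
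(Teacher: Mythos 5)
Your overall route is the intended one: the paper's own proof is literally ``similar to the proof of Theorem~\ref{main_th}'', and your reduction --- show the closed loop is again Hamiltonian with a modified Hamiltonian, pass to the canonical coordinates supplied by $M$ in which $F_1=q_1$, and read off the spectrum of the decoupled $2\times 2$ blocks --- is exactly that argument. The block computation itself is fine: eigenvalues $\pm\sqrt{\omega_1(c-\omega_1)}$ for the first block once the $q_1^2$-coefficient is $\tfrac{\omega_1-c}{2}$, unchanged pairs $\pm\mathrm{i}\,\omega_k$ for $k\ge 2$, semisimplicity untouched, and no LaSalle-type argument needed.

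The genuine gap is the sign you flag but never resolve, and it is the entire content of the theorem. Your displayed identity, which faithfully follows the stated control $u=-c\,F_1$ and the closed-loop convention $\dot z=X_H-X_{F_1}u$ (the same conventions as in Sec.~\ref{sec:sod}), gives $H_{\mathrm{mod}}=H+\tfrac12\,c\,F_1^2$; in the $M$-coordinates the first block then reads $\tfrac{\omega_1}{2}p_1^2+\tfrac{\omega_1+c}{2}q_1^2$, whose eigenvalues are $\pm\mathrm{i}\sqrt{\omega_1(\omega_1+c)}$ --- the origin stays a center-$\dots$-center for every $c$, and the claimed saddle never appears. The saddle is produced only when the quadratic correction is \emph{subtracted}, i.e.\ only for the control $u=+c\,F_1$ (equivalently $H_{\mathrm{mod}}=H-\tfrac12\,c\,F_1^2$), in which case your eigenvalue computation goes through and $c>\omega_1$ is exactly the needed hypothesis. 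Declaring ``the sign is the one that subtracts'' contradicts your own display; a complete proof must either take the control with the opposite sign and say so explicitly (noting that the sign convention must be reversed relative to Sec.~\ref{sec:sod}, where $-c\,F_1$ correctly \emph{adds} $\tfrac{c}{2}q_1^2$ to cure the saddle), or exhibit where a minus sign actually enters; as written, the computation you display establishes the opposite of the claim. (Minor: for $F_1=q_1$ to hold, the sum defining $F_1$ must run over all $2n$ coordinates, not $j=1,\dots,n$.)
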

\begin{proof}
Similar to the proof of Theorem~\ref{main_th}.
\end{proof}

%%%%%%%%%%%%%%%%%%%%%
\section{Examples}
\label{sec:ex}
 We illustrate the procedure of making
a saddle-type equilibrium asymptotically
stable with two examples.

\subsection{A model example}
The following system represents a typical pattern of
isomerization reactions  \cite{TachibanaFukui78}.

Consider a system with potential 
function
\begin{equation}\label{potential}
V(x_1,x_2)=\frac{1}{a^2} \, x_1^2 \, (x_1 - 1)^2 + \frac{1}{b^2}\, x_2^2, \quad a>b,
\end{equation}
and 
Hamiltonian
\begin{equation}
H(x_1,x_2,P_1,P_2)=\frac{1}{2}\,P_1^2+\frac{1}{2}\,P_2^2
+V(x_1,x_2).
\end{equation}
Clearly, $H$ has three equilibria
which are critical points of $V$. 
These are two centers; $(x_1,x_2)=(0,0)$ and
 $(x_1,x_2)=(1,0)$, and one saddle; $(x_1,x_2)=(1/2,0)$.
As we are interested in the saddle,
we translate the coordinates by
$(x_1,x_2)\mapsto (x_1+1/2,x_2)$
to move the saddle to the origin. 
We use the same notation for the translated
coordinates and the potential, then we have
\begin{equation}\label{potential_new}
V(x_1,x_2)=\frac{1}{a^2} \, x_1^2 \, (x_1 - 1)^2 + \frac{1}{b^2}\, x_2^2, \quad a>b
\end{equation}
which will be used henceforth. The contours of the potential surface 
are depicted in Fig.~\ref{fig:saddle_equipotential}. 

\begin{figure*}
\begin{center}
\raisebox{7cm}{(a)}\includegraphics[angle=0,width=6cm]{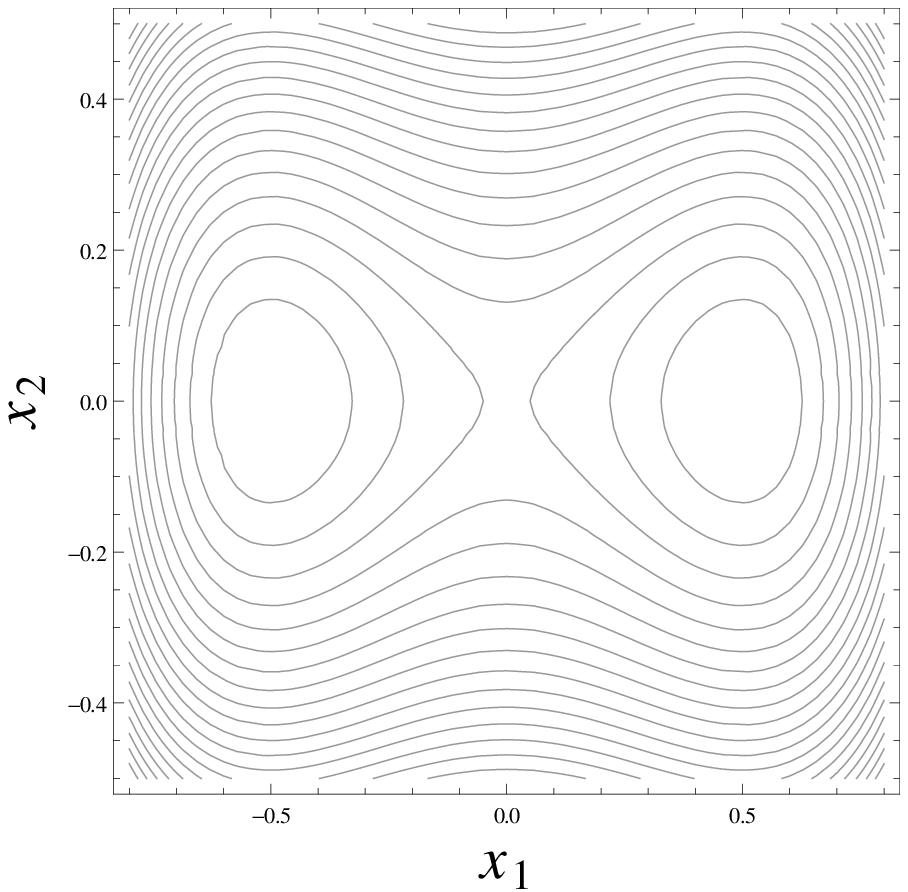} 
\raisebox{7cm}{(b)}\includegraphics[angle=0,width=6cm]{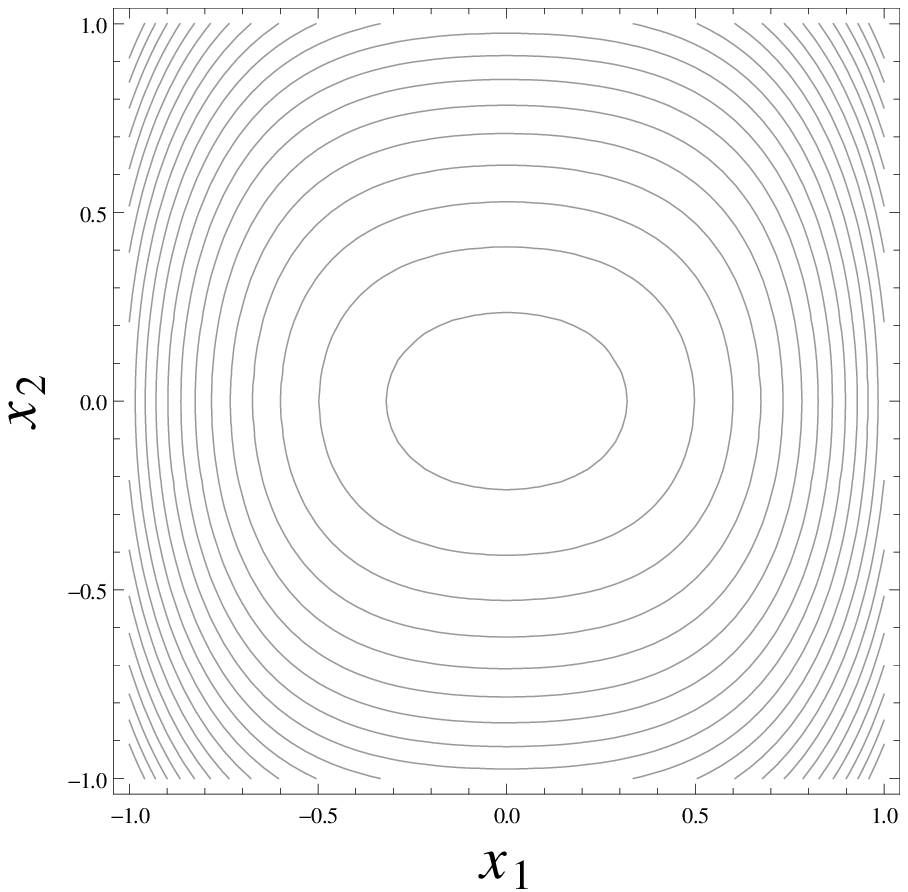} 
\end{center}
\caption{\label{fig:saddle_equipotential}
Contours of potential functions; (a) the original model
potential (\ref{potential_new}), (b) the modified potential
(\ref{mod_pot}). }
\end{figure*}

The Hessian matrix $J \, D^2H$ at point $z_0=(0,0,0,0)$ is computed
to be
\begin{equation}
J \, D^2H(z_0)=
\left[
\begin{array}{cccccc}
0&0&1&0 \\
0&0&0&1\\
\frac{1}{a^2}&0&0&0 \\
0&-\frac{2}{b^2}&0&0\\
\end{array}
\right]
\end{equation}
which has eigenvalues
$e_1=\frac{1}{a}=-e_3, \quad e_2= \frac{\sqrt{2}}{b}\,\mathrm{i}=-e_4$
with corresponding eigenvectors
$v_1=\{a, 0, 1, 0\}, \, v_3=\{-a, 0, 1, 0\} , \, v_2=\{0,- \frac{b}{\sqrt{2}}\,\mathrm{i}, 0, 1\},
 \, v_4=\{0,- \frac{b}{\sqrt{2}}\,\mathrm{i}, 0, 1\}$.
So, we have
$c_1 = (v_1\cdot J \, v_3)^{-1/2}=(2\,a)^{-1/2}, \,
c_2 = (\mbox{Re}(v_2) \cdot J  \, \mbox{Im}(v_2))^{-1/2}
=2^{1/4} \, b^{-1/2}.$
Then the matrix $M$ reads
\begin{equation}
M=
\left[
\begin{array}{cccccc}
\frac{\sqrt{a}}{\sqrt{2}}&0&-\frac{\sqrt{a}}{\sqrt{2}}&0 \\
0&0&0&-\frac{\sqrt{b}}{2^{1/4}} \\
\frac{1}{\sqrt{2\,a}}&0&\frac{1}{\sqrt{2\,a}}&0 \\
0&\frac{2^{1/4}}{\sqrt{b}}&0&0\\
\end{array}
\right].
\end{equation}
Finally, the matrix $S$
which is the multiplication of
the rotation matrix 
\begin{equation}
R=
\left[
\begin{array}{cccccc}
\frac{1}{\sqrt{2}}&0&-\frac{1}{\sqrt{2}}&0 \\
0&0&0&0  \\
\frac{1}{\sqrt{2}}&0&\frac{1}{\sqrt{2}}&0 \\
0&0&0&0\\
\end{array}
\right]
\end{equation}
and the matrix
$M$ becomes
\begin{equation}
S=
\left[
\begin{array}{cccccc}
\frac{1}{\sqrt{a}}&0&0&0 \\
0&0&0&\frac{\sqrt{b}}{2^{1/4}}  \\
0&0&\sqrt{a}&0 \\
0&-\frac{2^{1/4}}{\sqrt{b}}&0&0\\
\end{array}
\right].
\end{equation}
Hence the functions $F_1, F_2$  
are obtained to be $F_1=\frac{x_1}{\sqrt{a}},\, F_2=\frac{\sqrt{b}}{2^{1/4}} \, P_2$.
Observe that, the modified Hamiltonian $H_{\mbox{mod}}$ has the form
$H_{\mbox{mod}}=H+\frac{1}{2 \, a^2} \, x_1^2$, and
the modified potential is 
\begin{equation}\label{mod_pot}
 V_{\mbox{mod}}=V+\frac{1}{2 \, a^2} \, x_1^2
\end{equation}
of which contours are depicted in Fig.~\ref{fig:saddle_equipotential} (b).
After the addition of associated controls, we have the system with equations of motion
\begin{equation}
\begin{split} 
\dot{x}_1&=P_1, \\
\dot{x}_2&=P_2+\frac{\sqrt{2}\,x_2}{b}, \\
\dot{P}_1&=-\frac{4\,x_1^3}{a^2}+\frac{P_1}{a}, \\
\dot{P}_2&=-\frac{2\,x_2}{b^2}, 
\end{split}
\end{equation}
where we choose $d_1=d_2=1$.

%%%
\subsection{Hydrogen atom in crossed and magnetic fields}

The following example is a Hamiltonian system which is
not of the form kinetic plus potential. We do not give the original
form but a form obtained after some manipulations \cite{Uzeretal02}.

The Hamiltonian can be put in the form
\begin{equation}
H=\frac{1}{2}\,(P_1^2
+P_2^2+P_3^2)-\frac{1}{R}+\frac{1}{2}\,(x_1\,P_2-x_2\,P_1)+\frac{1}{8}\,
(x_1^2+x_2^2)-\epsilon \, x_1,
\end{equation}
where
$R=(x1^2+x_2^2+x_3^2)^{1/2}$.
We will consider the experimentally interesting value $\epsilon=0.58$ 
henceforth. The Stark saddle point in atomic physics corresponds to the
point $(x_1,x_2,x_3,P_1,P_2,P_3)=(\epsilon^{1/2},0,0,0,-\frac{1}{2}\,\epsilon^{-1/2})$.
So after a coordinate shift 
$(x_1,x_2,x_3,P_1,P_2,P_3)\mapsto (x_1-\epsilon^{1/2},x_2,x_3,P_1,P_2,P_3
+\frac{1}{2}\,\epsilon^{-1/2}),$
by retaining the same notation for 
the translated coordinates and the new Hamiltonian we have
\begin{equation}
H=\frac{1}{2}\,(P_1^2
+P_2^2+P_3^2)-\frac{1}{R}+\frac{1}{2}\,(x_1\,P_2-x_2\,P_1)+\frac{1}{8}\,
(x_1^2+x_2^2)-\epsilon \, x_1-\epsilon^{1/2},
\end{equation}
where $R=\left[(x_1+\epsilon^{1/2})^2+x_2^2+x_3^2\right]^{1/2}$.
Then the matrix $J \, D^2H(0)$ is obtained to be
\begin{equation}
J \, D^2H(0)=
\left[
\begin{array}{cccccc}
0&-0.5&0&1&0&0 \\
0.5&0&0&0&1&0\\
0&0&0&0&0&1 \\
0.63343&0&0&0&-0.5&0\\
0&-0.691715&0&0.5&0&0\\
0&0&-0.441715&0&0&0
\end{array}
\right]
\end{equation}
which has eigenvalues
\begin{equation}
e_1=0.63645=-e_4, \quad e_2=0.981506 \, \mathrm{i}=-e_5, \quad e_3=0.664616 \,\mathrm{i}=-e_6
\end{equation}
with corresponding eigenvectors
\begin{equation}
 \begin{split}
v_{1,4}&=\{\pm 0.63645, 0.478361, 0, 0.644249, \mp 0.0137719, 0\}, \\
v_{2,5}&=\{\pm 0.981506 \,\mathrm{i}, 1.84678, 0, -0.0399619, 
\pm 1.32188 \, \mathrm{i}, 0\} , \\
v_{3,6}&=\{0, 0,  \mp 1.50463 \,\mathrm{i}, 0, 0, 1\}.
\end{split}
\end{equation}
So, we have
\begin{equation}
 \begin{split}
c_1 &= (v_1\cdot J \, v_4)^{-1/2}=1.09551, \\
c_2 &= (\mbox{Re}(v_2) \cdot J  \, \mbox{Im}(v_2))^{-1/2}
= 0.634944 \\
c_3 &= (\mbox{Re}(v_3) \cdot J  \, \mbox{Im}(v_3))^{-1/2}=0.81524.
\end{split}
\end{equation}
Then the matrix $M$ reads
\begin{equation}
M=
\left[
\begin{array}{cccccc}
 0.697235 & 0 & 0 & -0.697235 & 0.623201 & 0 \\
 0.524048 & 1.1726 & 0 & 0.524048 & 0 & 0 \\
 0 & 0 & 0 & 0 & 0 & -1.22663 \\
 0.705779 & -0.0253736 & 0 & 0.705779 & 0 & 0 \\
 -0.0150872 & 0 & 0 & 0.0150872 & 0.839317 & 0 \\
 0 & 0 & 0.81524 & 0 & 0 & 0 \\
\end{array}
\right].
\end{equation}
Finally, the matrix $S$
which is the multiplication of
the rotation matrix $R$ and the matrix
$M$ becomes
\begin{equation}
S=
\left[
\begin{array}{cccccc}
 0.998122 & 0 & 0 & 0 & -0.741116 & 0 \\
 0 & 0.839317 & 0 & -0.623201 & 0 & 0 \\
 0 & 0 & 0 & 0 & 0 & 1.22663 \\
 0 & 0.0213366 & 0 & 0.986039 & 0 & 0 \\
 0.0253736 & 0 & 0 & 0 & 1.1726 & 0 \\
 0 & 0 & -0.81524 & 0 & 0 & 0 \\
\end{array}
\right].
\end{equation}
Hence the functions $F_1,F_2,F_3$ are
derived as
\begin{equation}
 \begin{split}
F_1&=-0.741116 \,P_2 + 0.998122 \,x_1, \\
F_2&=-0.623201 \,P1 + 0.839317 \,x_2,\\
F_3&=1.22663 \,P_3.
\end{split}
\end{equation}
This way, the new system is made asymptotically stable around the origin.

%%%%%%%%%%%%%%%%%%%%%%%%%%
\section{Conclusions and future work}
An algorithmic stabilization of reacting
systems is outlined. It relays on the linearization
of the Hamiltonian vector field around
the equilibrium. The examples reflect the novelty of
the technique given in the paper.
Next step is to do a study for Hamiltonian
systems with symmetry where the equilibria are replaced by relative equilibria.
This can be done by using
canonical coordinates on the reduced space instead of 
the reduced energy momentum method as in \cite{JalnapurkarMarsden00}. A derivation
method of canonical coordinates on a
reduced space for $N$-body reduction
is outlined in \cite{CiftciWaalkens12}
and for cotangent bundle reduction is given in \cite{Ciftcietal}.

 %%%%%%%%%%%%%%
\bibliographystyle{unsrt}
\bibliography{srcbib}

\begin{thebibliography}{10}

\bibitem{Sontag89}
E.~D. Sontag.
\newblock Feedback stabilization of nonlinear systems.
\newblock In {\em Mathematical Theory of Networks and Systems. Birkhauser},
  pages 61--81. Birkhauser, 1989.

\bibitem{NijmeijerSchaft90}
H.~Nijmeijer and A.~van~der Schaft.
\newblock {\em Nonlinear dynamical control systems}.
\newblock Springer-Verlag, New York, 1990.

\bibitem{JalnapurkarMarsden00}
S.~M. Jalnapurkar and J.~E. Marsden.
\newblock Stabilization of relative equilibria.
\newblock {\em IEEE Trans. Automat. Control}, 45(8):1483--1491, 2000.

\bibitem{Blochetal00}
A.~M. Bloch, N.~E. L.eonard, and J.~E. Marsden.
\newblock Controlled {L}agrangians and the stabilization of mechanical systems.
  {I}. {T}he first matching theorem.
\newblock {\em IEEE Trans. Automat. Control}, 45(12):2253--2270, 2000.

\bibitem{Blochetal01}
A.~M. Bloch, D.~E. Chang, N.~E. Leonard, and J.~E. Marsden.
\newblock Controlled {L}agrangians and the stabilization of mechanical systems.
  {II}. {P}otential shaping.
\newblock {\em IEEE Trans. Automat. Control}, 46(10):1556--1571, 2001.

\bibitem{KrechetnikovMarsden07}
R.~Krechetnikov and J.~E. Marsden.
\newblock Dissipation-induced instabilities in finite dimensions.
\newblock {\em Rev. Mod. Phys.}, 79:519--553, Apr 2007.

\bibitem{Uzeretal02}
T.~Uzer, C.~Jaff{\'e}, J.~Palaci{\'a}n, P.~Yanguas, and S.~Wiggins.
\newblock The geometry of reaction dynamics.
\newblock {\em Nonlinearity}, 15:957--992, 2002.

\bibitem{Waalkensetal08}
H.~Waalkens, R.~Schubert, and S.~Wiggins.
\newblock Wigner's dynamical transition state theory in phase space: classical
  and quantum.
\newblock {\em Nonlinearity}, 21(1):R1--R118, 2008.

\bibitem{Jalnapurkar99}
S.~M. Jalnapurkar.
\newblock {\em Modeling and stabilization for mechanical systems}.
\newblock ProQuest LLC, Ann Arbor, MI, 1999.
\newblock Thesis (Ph.D.)--University of California, Berkeley.

\bibitem{CiftciWaalkens12}
\"{U}. \c{C}ift\c{c}i and H.~{Waalkens}.
\newblock {Phase space structures governing reaction dynamics in rotating
  molecules}.
\newblock {\em Nonlinearity}, 25:791--892, 2012.

\bibitem{Wiggins94}
S.~Wiggins.
\newblock {\em Normally {H}yperbolic {I}nvariant {M}anifolds in {D}ynamical
  {S}ystems}.
\newblock Springer, Berlin, 1994.

\bibitem{TachibanaFukui78}
A.~Tachibana and K.~Fukui.
\newblock Differential geometry of chemically reacting systems.
\newblock {\em Theoretica chimica acta}, 49:321--347, 1978.

\bibitem{Ciftcietal}
\"{U}. \c{C}ift\c{c}i, H.~{Waalkens}, and H.~Broer.
\newblock {Cotangent bundle reduction and Poincar{\'e}-Birkhoff normal forms}.
\newblock {\em Preprint}.

\end{thebibliography}
%%%%%%%%%%%%%%
\end{document}